\documentclass{ifacconf}
\usepackage{graphicx}      
\usepackage{natbib}        
\usepackage{amsmath,amssymb,amsfonts}
\usepackage{xcolor}
\usepackage{arydshln} 
\usepackage{enumitem}

\usepackage{url}

\newtheorem{assumption}[thm]{Assumption}

\newcommand{\ml}{}

\begin{document}
\begin{frontmatter}

\title{A Common Lyapunov Matrix Approach to the Exponential Stability of Augmented Primal-Dual Gradient Flow as LPV Systems\thanksref{footnoteinfo}} 

\thanks[footnoteinfo]{This work was supported by JSPS KAKENHI under grant numbers 24K23864 and 26K17401; National Natural Science Foundation of China under Grant 62173155 and 52188102.
This work was also supported by Japan Science and Technology Agency (JST) as part of Adopting Sustainable Partnerships for Innovative Research Ecosystem (ASPIRE), Grant Number JPMJAP2402.}

\author[First]{Mengmou Li} 
\author[Second]{Lijun Zhu} 
\author[First]{Masaaki Nagahara}

\address[First]{Hiroshima University (e-mail: mmli.research@gmail.com; nagahara@ieee.org).}
\address[Second]{Huazhong University of Science and Technology (e-mail: ljzhu@hust.edu.cn)}

\begin{abstract}
We show that a common Lyapunov matrix exists for the convex combination of two Hurwitz matrices if and only if the intersection of the set of strict Lyapunov matrices for one matrix and the set of non-strict Lyapunov matrices for the other is nonempty.
This simple relaxation is useful for the convergence analysis of the augmented primal-dual gradient flow for constrained optimization problems with affine inequality constraints, which can be viewed as a polytopic linear parameter-varying (LPV) system driven by the active-constraint selector. 
Under a relaxed strong convexity condition, exponential convergence is proved for the LPV system. The analysis can further be extended to the integral quadratic constraints (IQCs) framework for LPV systems to facilitate numerical search of the convergence rate.
\end{abstract}

\begin{keyword}
Common quadratic Lyapunov functions, augmented primal-dual gradient flow, exponential stability, integral quadratic constraints
\end{keyword}

\end{frontmatter}

\section{Introduction}
The study of exponential convergence rates of optimization algorithms, for both constrained and unconstrained problems, has attracted significant research attention in recent years (see, e.g., \cite{lessard2016analysis}, \cite{taylor2018lyapunov}, \cite{li2021exponentially}, \cite{li2025convergence}, \cite{li2025first}). 
Among them, the primal–dual dynamics have received particular interest owing to their simple implementation yet analytically challenging behavior (see, e.g., \cite{dhingra2018proximal,cortes2019distributed, qu2019exponential,tang2020semi,wang2021exponential,van2023automated,ozaslan2023tight,li2025exponential}.)
Specifically, \cite{qu2019exponential} showed exponential convergence for strongly convex functions with affine equality and inequality constraints. 
For general nonlinear inequality constraints, semi-global exponential stability has been established in \cite{tang2020semi}.
For equality constraints, the requirement of global strong convexity can be relaxed to strong convexity restricted to the subspace defined by the equality matrix under the augmented Lagrangian framework (\cite{li2025exponential}).

Similarly to their use in analyzing unconstrained algorithms, the robust control framework of integral quadratic constraints (IQCs) (\cite{megretski1997system,lessard2016analysis}) can also be employed to study constrained algorithms (\cite{dhingra2018proximal,li2025exponential}) by formulating the dynamics as a Lur’e system, that is, a feedback interconnection between a linear time-invariant (LTI) system and a nonlinearity.
However, this framework has not been applied to the primal–dual dynamics with inequality constraints yet, owing to their non-smooth nature that cannot be represented by an LTI system, even though such primal–dual formulations arise in many applications such as network congestion control and frequency regulation in power systems (\cite{low1999optimization,li2023distributed}).

In the standard Lur’e representation, nonlinearities are typically collected together as the ``troublesome'' component. Nevertheless, certain systems involve benign nonlinear terms that can instead be interpreted as switched linear systems or linear parameter-varying (LPV) systems admitting a common quadratic Lyapunov function (CQLF).
Recent studies have extended the IQC framework to handle time-varying but non-vanishing stepsizes by modeling such algorithms as LPV systems with nonlinearities (see \cite{jakob2025linear} and references therein).
However, it is worth mentioning that the assumption of a common constant Lyapunov function \ml{could be} restrictive (\cite{chesi2007robust}). 
Moreover, as linear matrix inequalities (LMIs) are typically solved numerically and thus only provide numerical certificates, it is desirable to establish
\ml{analytical results that guarantee} 
the existence of a common Lyapunov matrix \textit{a priori} before applying this framework. This work fulfills this role by providing an analytical proof for the augmented primal–dual gradient flow under affine inequality constraints, within the LPV framework in connection with the gradient nonlinearity.

In this work, we focus on the primal-dual gradient flow derived from the augmented Lagrangian (\cite{bertsekas2014constrained}). The vector field is continuous but not differentiable everywhere (\cite{qu2019exponential}). 
Although smooth gradient flows have been proposed for handling inequality constraints (see, e.g., \cite{li2018generalized}), they introduce additional nonlinearities that complicate the analysis.
Instead of designing specific Lyapunov functions for primal-dual dynamics, we construct a common Lyapunov matrix for the convex hull of a group of Hurwitz matrices corresponding to the system modes associated with the active and inactive statuses of the inequality constraints.
Compared with \cite{qu2019exponential}, our work relaxes the global strong convexity for the affine inequality constrained problems, in a manner analogous to the equality-constrained case (\cite{li2025exponential}).
The relaxation induced by inequality constraints is inherently more restrictive, since an inequality constraint may not always be active.  When inactive, the problem effectively reduces to an unconstrained one \ml{with respect to that constraint}.
Furthermore, the proposed framework can be extended to the exponential stability analysis of distributed optimization algorithms under various affine constraints, provided that the corresponding linear systems admit a common Lyapunov matrix for all Hurwitz vertices.

The remainder of the paper is organized as follows.
Section~\ref{sec: pre and main} presents the preliminaries and the main results on the common Lyapunov matrix.
In Section~\ref{sec: app to con opt}, the proposed framework is applied to optimization problems with affine inequality constraints.
A numerical example is provided in Section~\ref{sec: example}, and concluding remarks are given in Section~\ref{sec: conclusion}.

\section{Preliminaries \& Main Results}\label{sec: pre and main}
\subsection{Notation}
$\mathbb{R}$ and $\mathbb{C}$ represent the sets of real and complex numbers, respectively. $\mathbb{R}^{n}$ and $\mathbb{R}^{n \times m}$ denote the sets of $n$-dimensional real vectors and $n \times m$ real matrices, respectively.
$\mathbb{S}_{++}^{n}$ denotes the set of $n \times n $ positive definite matrices.
The symbols $I_{n}$, and $\mathbf{0}_{n \times m}$ represent the $n\times n$ identity matrix and the $n \times m$ zero matrix, respectively, while their subscripts may be omitted.  The symbol $\mathbf{1}$ denotes the $n$-dimensional vector whose entries are all equal to one.
The operators $\text{diag}(v)$ and $\text{blkdiag}(A_1, \ldots, A_k)$ denote the diagonal matrix from the vector $v$ and block-diagonal matrix from the square matrices $A_1, \ldots, A_k$, respectively.
The Kronecker product is denoted by $\otimes$, and the Hadamard (entrywise) product by $\circ$.  The Euclidean norm is denoted by $\|\cdot\|$.  
Given symmetric matrices $A$ and $B$, $A \prec B$ means that $B- A$ is positive definite, and $\succ$, $\preceq$, $\succeq$ are defined \textit{mutatis mutandis}. 
The vector $e_j \in \mathbb{R}^{m}$ represents the $j$-th standard basis vector.
The kernel (nullspace) of a matrix $A$ is defined by $\text{ker} (A) = \left\{ x: A x = 0\right\}$.
The convex hull of a set of $N$ points $\left\{ p_{(1)}, \ldots, p_{(N)}\right\}$ is defined as
$\mathrm{conv} \left(  p_{(1)}, \ldots, p_{(N)} \right) = \left\{
\sum_{i=1}^{N} \theta_{(i)} p_{(i)} :
\theta_{(i)} \ge 0,~ \forall i,~
\sum_{i=1}^{N} \theta_{(i)} = 1
\right\}.$

\subsection{Common Lyapunov matrix}
We define three sets of Lyapunov matrices associated with a given matrix $A \in \mathbb{R}^{n \times n}$:
\begin{itemize}
    \item The set of \textbf{strict Lyapunov matrices}:
    \[
    \mathcal{P}_A := \left\{ P \in \mathbb{S}_{++}^n : PA + A^\top P \prec 0 \right\}
    \]
    \item The set of \textbf{Lyapunov matrices with margin \( \varepsilon > 0 \)}:
    \[
    \mathcal{P}_A^\varepsilon := \left\{ P \in \mathbb{S}_{++}^n : PA + A^\top P \preceq -\varepsilon I \right\}
    \]
    \item The set of \textbf{non-strict Lyapunov matrices}:
    \[
    \bar{\mathcal{P}}_A := \left\{ P \in \mathbb{S}_{++}^n : PA + A^\top P \preceq 0 \right\}
    \]
\end{itemize}
Clearly, $\mathcal{P}_A^\varepsilon \subset \mathcal{P}_A \subset \bar{\mathcal{P}}_A$. 

Consider the continuous-time state space system
\begin{align}\label{eq: LPV system}
\dot{x} (t) = A(\theta (t)) x (t)
\end{align}
where $\theta (t) \in \Theta \subset \mathbb{R}^{q}$ is a parameter that can be time varying, 
$\Theta := \mathrm{conv} \left( \theta_{(1)}, \ldots, \theta_{(N)}\right)$ is a polytope of $N$ vertices, and $A (\theta (t))$ depends affinely on $\theta(t)$.

Given a list of Hurwitz matrices $A_{(i)} \in \mathbb{R}^{n \times n}$, $i \in \mathcal{I} : =  \left\{ 1, \ldots, N\right\}$,  we say that there exists a \textit{common Lyapunov matrix} (CLM)  if there exists $P \succ 0$, such that
\begin{align}\label{eq: definition of CLM}
P A_{(i)} + A_{(i)}^{\top}  P  \prec 0, ~ \forall i \in \mathcal{I}.
\end{align}
In other words, $ \bigcap_{i \in \mathcal{I}} \mathcal{P}_{A_{(i)}} \neq \emptyset$. If $A_{(i)} = A (\theta_{(i)})$, $i \in \mathcal{I}$, which correspond to the $N$ vertices of polytope $\Theta$ in \eqref{eq: LPV system}, then $A (\theta (t))$ lies in the convex hull $\mathrm{conv}\left(A_{(1)},\dots,A_{(N)}\right)$, and there exists a \textit{common quadratic Lyapunov function} (CQLF) $V = x^\top P x$ for system \eqref{eq: LPV system}.

Condition \eqref{eq: definition of CLM} is a set of linear matrix inequalities (LMIs) which can be solved efficiently. However, it may be difficult to verify analytically when the scale of the problem is large.  Instead, we could relax it and obtain an easier-to-verify condition that is also sufficient and necessary for the existence of a CLM.
We state our main result in terms of two Hurwitz matrices.
\begin{thm}\label{thm: common Lyapunov matrix for Hurwitz matrices}
Let \( A_{(1)}, A_{(2)} \in \mathbb{R}^{n \times n} \) be two Hurwitz matrices.  
For \( \theta \in [0,1] \), define \( A_\theta := \theta A_{(1)} + (1 - \theta) A_{(2)} \).  
Then the following are equivalent:
\begin{enumerate}[label=\arabic*)]
    \item \label{itm:A} There exists a common Lyapunov matrix $P \succ 0$ and a constant $\varepsilon > 0$ independent of $\theta$, such that
    \begin{align*}
        P \in \mathcal{P}_{A_{\theta}}^\varepsilon, \text{ for all } \theta \in [0,1],
    \end{align*}
    that is, $P \in \bigcap_{\theta \in [0,1]} \mathcal{P}_{A_\theta}^\varepsilon$.
    \item \label{itm:B} The intersection of $\mathcal{P}_{A_{(1)}}$ and $\bar{\mathcal{P}}_{A_{(2)}} $ is nonempty, that is,
    \begin{align}\label{eq:intersection}
        \mathcal{P}_{A_{(1)}} \cap \bar{\mathcal{P}}_{A_{(2)}} \neq \emptyset, 
    \end{align}
\end{enumerate}
\end{thm}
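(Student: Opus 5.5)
The implication \ref{itm:A}$\Rightarrow$\ref{itm:B} is immediate, so almost all the work is in the converse, which I would prove by perturbing a given matrix in $\mathcal{P}_{A_{(1)}} \cap \bar{\mathcal{P}}_{A_{(2)}}$ with a strict Lyapunov matrix of $A_{(2)}$. Throughout, write $\mathcal{L}_A(P) := PA + A^\top P$. This expression is linear in $P$ for fixed $A$, and linear in $A$ for fixed $P$. Hence $\mathcal{L}_{A_\theta}(P) = \theta\,\mathcal{L}_{A_{(1)}}(P) + (1-\theta)\,\mathcal{L}_{A_{(2)}}(P)$.

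\emph{Direction \ref{itm:A}$\Rightarrow$\ref{itm:B}.} Evaluate the hypothesis at the two endpoints. At $\theta=1$ we get $P \in \mathcal{P}^\varepsilon_{A_{(1)}} \subset \mathcal{P}_{A_{(1)}}$. At $\theta=0$ we get $P \in \mathcal{P}^\varepsilon_{A_{(2)}} \subset \bar{\mathcal{P}}_{A_{(2)}}$. So $P$ lies in the intersection \eqref{eq:intersection}.

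\emph{Direction \ref{itm:B}$\Rightarrow$\ref{itm:A}.} Take $P_0$ in the intersection, so that $Q_1 := \mathcal{L}_{A_{(1)}}(P_0) \prec 0$ and $Q_2 := \mathcal{L}_{A_{(2)}}(P_0) \preceq 0$. The convex combination $\theta Q_1 + (1-\theta)Q_2$ is negative definite for every $\theta \in (0,1]$. However, its margin may degenerate as $\theta \to 0$, which is exactly the obstacle. To repair the $\theta=0$ end, use that $A_{(2)}$ is Hurwitz: let $P_2 \succ 0$ solve $\mathcal{L}_{A_{(2)}}(P_2) = -I$. Set $P := P_0 + \delta P_2$ with $\delta>0$ to be chosen; then $P \succ 0$. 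Let $R_1 := \mathcal{L}_{A_{(1)}}(P_2)$ and $c := \tfrac12 \lambda_{\min}(-Q_1) > 0$. Choose $\delta$ small enough that $\delta\|R_1\| \le c$. Then
\begin{align*}
\mathcal{L}_{A_{(1)}}(P) = Q_1 + \delta R_1 \preceq -2cI + cI = -cI, \qquad \mathcal{L}_{A_{(2)}}(P) = Q_2 - \delta I \preceq -\delta I .
\end{align*}

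Finally, combine the two endpoint bounds using linearity in $A$:
\begin{align*}
\mathcal{L}_{A_\theta}(P) \preceq -\theta c I - (1-\theta)\delta I \preceq -\min(c,\delta)\, I \quad \text{for all } \theta\in[0,1].
\end{align*}
So $P \in \bigcap_{\theta\in[0,1]} \mathcal{P}^\varepsilon_{A_\theta}$ with the $\theta$-independent margin $\varepsilon = \min(c,\delta)$. The only delicate point is the choice of $\delta$: it must be small enough to preserve the strict inequality at $A_{(1)}$, yet positive so that it upgrades the non-strict inequality at $A_{(2)}$ to a strict one.
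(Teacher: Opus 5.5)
Your proof is correct and is essentially the paper's argument. The paper takes $P_\alpha = \alpha P_{(1)} + P_{(2)}$ with $\alpha$ large, which is your $P_0 + \delta P_2$ rescaled by $\alpha = 1/\delta$, and it likewise bounds $P A_\theta + A_\theta^\top P$ by the convex combination of the two endpoint estimates; your version additionally gives the uniform margin $\varepsilon = \min(c,\delta)$ explicitly.
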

\begin{pf}
\newline
\ref{itm:A} $\implies$ \ref{itm:B}:  
This is obvious, since $P \in \left( \mathcal{P}_{A_{(1)}} \cap \mathcal{P}_{A_{(2)}} \right) \subset \left( \mathcal{P}_{A_{(1)}} \cap \bar{\mathcal{P}}_{A_{(2)}} \right)$.\\
\ref{itm:B} $\implies$ \ref{itm:A}: 
Assume that \eqref{eq:intersection} holds, that is, there exists \( P_{(1)} \in \mathcal{P}_{A_{(1)}} \cap \bar{\mathcal{P}}_{A_{(2)}} \) such that
\begin{align*}
\quad P_{(1)} A_{(1)} + A_{(1)}^{\top} P_{(1)} \prec 0, \quad P_{(1)} A_{(2)} + A_{(2)}^\top P_{(1)} \preceq 0.
\end{align*}
It is clear that there exists $\epsilon > 0$ such that $P_{(1)} \in \mathcal{P}_{A_{(1)}}^{\epsilon}$.
Next, pick a bounded \( P_{(2)} \in \mathcal{P}_{A_{(2)}} \), and define
$P_{\alpha} := \alpha P_{(1)} + P_{(2)}$, $\alpha > 0$.
Clearly, \( P_\alpha \succ 0 \) for any \( \alpha > 0 \).
We have
\begin{align*}
& P_\alpha A_\theta + A_\theta^\top P_\alpha\\
= & \left(\alpha P_{(1)} + P_{(2)}\right) \left(\theta A_{(1)} + (1 - \theta) A_{(2)} \right)\\
& \quad + \left(\theta A_{(1)}^\top + (1 - \theta) A_{(2)}^\top)(\alpha P_{(1)} + P_{(2)}\right) \\
= & \alpha \theta \left(P_{(1)} A_{(1)} + A_{(1)}^\top P_{(1)} \right) + \theta \left(P_{(2)} A_{(1)} + A_{(1)}^\top P_{(2)} \right) \\
& + (1 - \theta) \left(P_{(2)} A_{(2)} + A_{(2)}^\top P_{(2)} \right)\\
&  + \alpha (1 - \theta) \left(P_{(1)} A_{(2)} + A_{(2)}^\top P_{(1)}\right) \\
\preceq & \alpha \theta \left(P_{(1)} A_{(1)} + A_{(1)}^\top P_{(1)}\right) +  \theta \left(P_{(2)} A_{(1)} + A_{(1)}^\top P_{(2)}\right) \\
& + (1 - \theta) \left(P_{(2)} A_{(2)} + A_{(2)}^\top P_{(2)}\right)
\end{align*}
where the inequality follows from $P_ {(1)} \in \bar{\mathcal{P}}_{A_ {(2)} }$.
As $\left\| P_ {(2)}  A_ {(1)}  + A_ {(1)} ^\top P_ {(2)} \right\|$ is bounded, there exists a sufficiently large $\alpha$ independent of $\theta$ such that 
$$\alpha \left(P_ {(1)}  A_ {(1)}  + A_ {(1)} ^\top P_ {(1)} \right) +  \left(P_ {(2)}  A_ {(1)}  + A_ {(1)} ^\top P_ {(2)} \right) \prec - \delta I_{n}$$
for some $\delta > 0$.
Therefore, the convex combination of the previous terms with $\left(P_ {(2)}  A_ {(2)}  + A_ {(2)} ^\top P_ {(2)} \right)$ yields that $P_\alpha A_\theta + A_\theta^\top P_\alpha \preceq - \varepsilon I$ with a margin $\varepsilon > 0$ uniformly over \( \theta \in [0,1] \), that is, $P_{\alpha} \in \bigcap_{\theta \in [0,1]} \mathcal{P}_{A_\theta}^\varepsilon$, for some $\varepsilon > 0$.
\end{pf}
The above result can be extended from two matrices to the convex hull of multiple matrices by repeatedly applying the above theorem to the obtained $A_{\theta}$ and a new vertex matrix, thereby covering the entire polytope.
\begin{cor}\label{cor: convex hull CLM}
Let $\left\{A_{(i)}\right\}$ be a collection of Hurwitz matrices.
Suppose there exist $P_{(k)} \succ 0$ and $\varepsilon_{(k)}>0$ such that
\begin{align*}
P_{(k)} \in \mathcal{P}_{A}^{\varepsilon_{(k)}},
\quad \text{for all } A\in \mathrm{conv}\left\{A_{(1)},\dots,A_{(k)}\right\}.
\end{align*}
Assume further that for a new vertex $A_{(k+1)}$ the non-strict inequality holds, that is,
$P_{(k)} \in \bar{\mathcal{P}}_{A_{(k+1)}}$.
Then there exist $P_{(k+1)} \succ 0$ and $\varepsilon_{(k+1)}>0$ such that
\begin{align*}
P_{(k+1)} \in \mathcal{P}_{A}^{\varepsilon_{(k+1)} }
\quad \text{for all } A\in \mathrm{conv}\left\{A_{(1)},\dots,A_{(k+1)}\right\}.
\end{align*}
\end{cor}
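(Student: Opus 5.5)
We mimic the proof of Theorem~\ref{thm: common Lyapunov matrix for Hurwitz matrices}, treating the whole hull $\mathcal{C}_k := \mathrm{conv}\{A_{(1)},\dots,A_{(k)}\}$ as the ``first vertex.'' Every $A \in \mathcal{C}_{k+1}$ can be written as $A = \theta B + (1-\theta) A_{(k+1)}$ with $B \in \mathcal{C}_k$ and $\theta \in [0,1]$. If $A = \sum_{i=1}^{k+1}\lambda_i A_{(i)}$ with $\lambda_{k+1}<1$, take $\theta = 1-\lambda_{k+1}$ and $B = \sum_{i\le k}(\lambda_i/\theta) A_{(i)}$; if $\lambda_{k+1}=1$, take $\theta=0$ and any $B\in\mathcal{C}_k$. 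The aim is therefore one matrix $P_{(k+1)}$ and one margin that work uniformly over the pair $(B,\theta)\in\mathcal{C}_k\times[0,1]$.

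\textbf{Construction.} Since $A_{(k+1)}$ is Hurwitz, we fix $Q \in \mathcal{P}_{A_{(k+1)}}$ and choose $\eta>0$ with $QA_{(k+1)} + A_{(k+1)}^\top Q \preceq -\eta I$. We then set $P_\alpha := \alpha P_{(k)} + Q$ with $\alpha>0$, so that $P_\alpha \succ 0$. Expanding as in the proof of the theorem gives
\begin{align*}
P_\alpha A + A^\top P_\alpha ={}& \theta\left[\alpha\left(P_{(k)}B + B^\top P_{(k)}\right) + QB + B^\top Q\right] \\
&+ (1-\theta)\left[\alpha\left(P_{(k)}A_{(k+1)} + A_{(k+1)}^\top P_{(k)}\right) + QA_{(k+1)} + A_{(k+1)}^\top Q\right].
\end{align*}
The hypothesis $P_{(k)} \in \bar{\mathcal{P}}_{A_{(k+1)}}$ makes the second bracket $\preceq -\eta I$. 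For the first bracket, the hypothesis $P_{(k)}\in\mathcal{P}_B^{\varepsilon_{(k)}}$ for all $B\in\mathcal{C}_k$ bounds it by $-\alpha\varepsilon_{(k)} I + \|QB + B^\top Q\| I$.

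\textbf{Uniform bound.} The key step is that $\|QB + B^\top Q\| \le 2\|Q\| \max_{i\le k}\|A_{(i)}\|$ is bounded on the compact set $\mathcal{C}_k$, independently of $B$. This uniformity is exactly what the hypothesis with a fixed margin $\varepsilon_{(k)}$ supplies, and it is the main point to verify. Choosing $\alpha > \left(2\|Q\|\max_i\|A_{(i)}\| + 1\right)/\varepsilon_{(k)}$ makes the first bracket $\preceq -I$. The convex combination is then $\preceq -\min(1,\eta) I$, so $P_{(k+1)} := P_\alpha$ and $\varepsilon_{(k+1)} := \min(1,\eta)$ satisfy the claim.

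Alternatively, one could invoke Theorem~\ref{thm: common Lyapunov matrix for Hurwitz matrices} directly with $A_{(1)} := B$ and $A_{(2)} := A_{(k+1)}$. That would only give an $\alpha$ depending on $B$, so the direct estimate above is preferred because it keeps the constants uniform over $\mathcal{C}_k$.
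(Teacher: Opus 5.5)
Your proof is correct and follows essentially the paper's route: the paper obtains the corollary by reapplying the $P_\alpha = \alpha P_{(1)} + P_{(2)}$ construction of Theorem~\ref{thm: common Lyapunov matrix for Hurwitz matrices}, with the current hull in the role of the first vertex and $A_{(k+1)}$ as the new one. Your explicit bound $\|QB + B^\top Q\| \le 2\|Q\|\max_{i\le k}\|A_{(i)}\|$, combined with the uniform margin $\varepsilon_{(k)}$, is what makes $\alpha$ independent of $B \in \mathrm{conv}\{A_{(1)},\dots,A_{(k)}\}$, which the paper's one-line justification leaves implicit.
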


The following lemma was shown in \cite{li2025exponential}, and will be directly used here.
\begin{lem}
Consider the real matrix
\begin{align}\label{eq: sum of skew and block diagonal}
A = \begin{bmatrix}
	 	- F & -T \\
	 	T^{\top} & \mathbf{0}
\end{bmatrix}
\end{align}
where $F \in \mathbb{R}^{n \times n}$, $T \in \mathbb{R}^{n \times m}$, and $m \leq n$.
If $F \succ 0$, and $T^{\top}$ is of full row rank, then $A$ is Hurwitz.
\label{lem:Hurwitz}
\end{lem}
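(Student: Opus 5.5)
We need to prove Lemma~\ref{lem:Hurwitz}: the matrix $A$ in \eqref{eq: sum of skew and block diagonal} is Hurwitz whenever $F \succ 0$ and $T^\top$ has full row rank. The plan is a direct eigenvalue argument that uses the splitting of $A$ into a negative semidefinite symmetric part and a skew-symmetric part. Nothing from Theorem~\ref{thm: common Lyapunov matrix for Hurwitz matrices} is needed.

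First, let $\lambda \in \mathbb{C}$ be an eigenvalue of $A$ with eigenvector $v = (x, y) \neq 0$, where $x \in \mathbb{C}^n$ and $y \in \mathbb{C}^m$. We compute $v^* A v$ and take real parts. The off-diagonal contributions are $-x^* T y + y^* T^\top x$. Since $T$ is real, $y^* T^\top x = \overline{x^* T y}$, so these terms sum to a purely imaginary number. This yields
\[
\mathrm{Re}(\lambda)\,\|v\|^2 = -\mathrm{Re}(x^* F x) = -x^* F_s x,
\]
where $F_s = (F+F^\top)/2 \succ 0$ is the symmetric part of $F$. (Here $F \succ 0$ is understood for the symmetric part, if $F$ is not assumed symmetric.) Hence $\mathrm{Re}(\lambda) \le 0$, with equality only if $x = 0$.

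Second, we rule out $\mathrm{Re}(\lambda) = 0$, which is the only delicate step. Suppose $x = 0$. The first block row of $Av = \lambda v$ reads $-Fx - Ty = \lambda x$, which reduces to $Ty = 0$. Full row rank of $T^\top$ means $T$ has full column rank $m$, so $y = 0$. Then $v = 0$, a contradiction. Therefore every eigenvalue satisfies $\mathrm{Re}(\lambda) < 0$, and $A$ is Hurwitz.

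The only points needing care are keeping track of conjugates when verifying that the coupling terms are purely imaginary, and noting where the rank condition enters: it is used exactly once, to exclude eigenvectors supported on the dual block. The case $\lambda = 0$ is covered by the same argument.
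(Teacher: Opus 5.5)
Your proof is correct and complete. The skew coupling contributes only an imaginary part to $v^*Av$, so $\mathrm{Re}(\lambda)\|v\|^2 = -x^*F_s x$. The full column rank of $T$ then rules out eigenvectors with $x = 0$, and this also covers $\lambda = 0$.

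The paper gives no proof of this lemma; it imports it from \cite{li2025exponential}. There is therefore no in-paper argument to compare against, and your eigenvalue argument is the standard one for saddle-point matrices of this form.

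One connection is worth stating explicitly. Your computation is equivalent to observing that $P = I$ gives $PA + A^\top P = \text{blkdiag}(-2F_s, \mathbf{0}) \preceq 0$, i.e., $I \in \bar{\mathcal{P}}_A$. This is exactly the non-strict Lyapunov matrix the paper later exploits, through $P_{(1)} = \alpha I + P_0$, when it applies its two-matrix common-Lyapunov theorem to the active and inactive constraint modes.
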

Consequently, there exists $P \in \mathcal{P}_{A}$, for $A$ given in Lemma~\ref{lem:Hurwitz}.

A natural extension of Lemma~\ref{lem:Hurwitz} is to consider a parameter-dependent positive definite matrix $F(x)$ using the S-procedure (\cite{polik2007survey}).
\begin{cor}\label{cor: varying parameter Hurwitz}
Consider the real matrix
\begin{align}\label{eq: existence of }
A(x) = \begin{bmatrix}
	 	- F(x) & -T \\
	 	T^{\top} & \mathbf{0}
\end{bmatrix}
\end{align}
where $T \in \mathbb{R}^{n \times m}$, $m \leq n$, $F (x) \in \mathbb{R}^{n \times n}$ and satisfies $\mu I_n \preceq F(x) \preceq \ell I_n$, with constants $\ell \geq\mu > 0$. 
Then, there exists a common Lyapunov matrix  $P \in \mathcal{P}_{A (x)}$ for any $x$.
\end{cor}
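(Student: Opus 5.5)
The plan is to construct the common Lyapunov matrix explicitly rather than go through Theorem~\ref{thm: common Lyapunov matrix for Hurwitz matrices}. The set $\{F: \mu I \preceq F \preceq \ell I\}$ is not a polytope with finitely many vertices, so the vertex argument does not apply directly. The construction instead uses only the two-sided bounds on $F(x)$.

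I will assume, as in Lemma~\ref{lem:Hurwitz}, that $T^\top$ has full row rank. Otherwise the zero block makes $A(x)$ non-Hurwitz. Let $0<\underline{\sigma}\le\bar\sigma$ be the smallest and largest singular values of $T$. Then $\underline{\sigma}^2 I_m \preceq T^\top T \preceq \bar\sigma^2 I_m$ and $TT^\top\preceq \bar\sigma^2 I_n$.

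I will look for $P$ in the cross-term form
\begin{align*}
P_\epsilon=\begin{bmatrix} I_n & \epsilon T\\ \epsilon T^\top & I_m\end{bmatrix},\qquad \epsilon>0 \text{ small}.
\end{align*}
Positive definiteness follows from a Schur complement: $I_m-\epsilon^2T^\top T\succ0$ holds whenever $\epsilon\bar\sigma<1$. A direct block computation, using that the skew part of $A(x)$ cancels in the off-diagonal blocks, gives
\begin{align*}
-(P_\epsilon A(x)+A(x)^\top P_\epsilon)=\begin{bmatrix} 2F(x)-2\epsilon TT^\top & \epsilon F(x)T\\ \epsilon T^\top F(x) & 2\epsilon T^\top T\end{bmatrix}.
\end{align*}
So it suffices to show that this matrix is positive definite uniformly in $x$ for one fixed small $\epsilon$.

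For this I will use the Schur complement again, with uniform bounds.

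First, the $(1,1)$ block satisfies $2F(x)-2\epsilon TT^\top\succeq(2\mu-2\epsilon\bar\sigma^2)I\succeq\mu I$ whenever $\epsilon\le\mu/(2\bar\sigma^2)$.

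Second, the Schur complement of that block is
\begin{align*}
2\epsilon T^\top T-\epsilon^2T^\top F(x)\,(2F(x)-2\epsilon TT^\top)^{-1}F(x)T .
\end{align*}
Using $\|F(x)\|\le\ell$ and the previous bound, this is at least $(2\epsilon\underline{\sigma}^2-\epsilon^2\ell^2\bar\sigma^2/\mu)I$.

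Third, choosing $\epsilon<\min\{1/\bar\sigma,\ \mu/(2\bar\sigma^2),\ 2\mu\underline{\sigma}^2/(\ell^2\bar\sigma^2)\}$ makes the whole matrix $\succeq \varepsilon I$ for some $\varepsilon>0$ independent of $x$. Hence $P_\epsilon\in\mathcal{P}^{\varepsilon}_{A(x)}\subset\mathcal{P}_{A(x)}$ for every $x$.

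The main obstacle is the $(2,2)$ block. It is only $O(\epsilon)$, while the coupling term is $O(\epsilon^2)$ but involves $F(x)$, so the $x$-uniformity hinges on bounding $F(x)$ from both sides: $\mu$ keeps the $(1,1)$ block invertible and $\ell$ controls the coupling. This is also where full rank of $T$, through $\underline{\sigma}>0$, is essential.

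An alternative, closer to the S-procedure phrasing, is to treat $F(x)$ as a sector-bounded uncertainty $\mu I\preceq F\preceq \ell I$ and absorb it via an S-procedure multiplier. I expect that to reduce to the same inequalities, so I would present the direct construction.
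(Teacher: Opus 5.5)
Your proof is correct, and it takes a different route from the paper. The paper gives no written proof of this corollary. It only remarks that it is ``a natural extension of Lemma~\ref{lem:Hurwitz} \ldots\ using the S-procedure.'' The intent is to treat $F(x)$ as a sector-bounded uncertainty $\mu I\preceq F\preceq \ell I$ around a nominal Hurwitz matrix and absorb it with a multiplier; the feasibility of the resulting LMI is left implicit.

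Your explicit certificate $P_\epsilon$, with the cross term $\epsilon T$, supplies that missing existence argument. The block formula for $-(P_\epsilon A(x)+A(x)^\top P_\epsilon)$ is right, and both Schur-complement steps are right. You obtain a bounded $P$, an explicit range for $\epsilon$ in terms of $\mu$, $\ell$ and the singular values of $T$, and a margin uniform in $x$. That is exactly what Theorem~\ref{thm: common Lyapunov matrix for Hurwitz matrices} and Theorem~\ref{thm exponential convergence} consume, since they work with $\mathcal{P}^{\varepsilon}$ and a bounded $P_{(2)}$. The S-procedure formulation instead connects naturally to the numerical LMI/IQC search of Corollary~\ref{cor} and may give a less conservative $P$, but it does not by itself guarantee existence.

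You are also right on two points the paper glosses over. First, the statement as printed omits full row rank of $T^\top$, and this is genuinely necessary: if $T\lambda=0$ with $\lambda\neq 0$, then $A(x)[0;\lambda]=0$. Second, $\{F:\mu I\preceq F\preceq \ell I\}$ is not a polytope, so Theorem~\ref{thm: common Lyapunov matrix for Hurwitz matrices} cannot simply be applied at vertices.

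Two small refinements:
\begin{enumerate}
\item The step from ``$(1,1)$ block $\succeq \mu I$ and Schur complement $\succeq cI$'' to ``whole matrix $\succeq \varepsilon I$ uniformly in $x$'' deserves one line. Use the congruence $M=L\,\mathrm{blkdiag}(M_{11},S)\,L^\top$ with $\|L^{-1}\|\le 1+\epsilon\ell\bar\sigma/\mu$.
\item Since $T^\top F M_{11}^{-1} F T\preceq (\ell^2/\mu)\,T^\top T$, the Schur complement is in fact $\succeq \epsilon(2-\epsilon\ell^2/\mu)\,T^\top T$. Your third condition on $\epsilon$ can therefore be weakened to $\epsilon<2\mu/\ell^2$, independent of the conditioning of $T$.
\end{enumerate}
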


\section{Application to Inequality Constrained Optimization}\label{sec: app to con opt}
Consider the inequality-constrained optimization problem 
\begin{equation}\label{eq:problem}
\begin{aligned}
	&\min_{x \in \mathbb{R}^{n}} f(x) \\
		& \text{subject to } T^{\top} x \leq b
\end{aligned}
\end{equation}
where the objective function $f(x): \mathbb{R}^{n} \rightarrow \mathbb{R}$ is continuously differentiable, $T \in \mathbb{R}^{n \times m}$ with $m \leq n$, and $b\in \mathbb{R}^{m}$. 
We denote $T = \left[T_1, \ldots, T_m \right]$,  with $T_i \in \mathbb{R}^{n}$, $b = \left[b_1, \ldots, b_m \right]^{\top}$ and the inequality is to be understood elementwise.
The following assumptions are commonly considered.
\begin{assumption}\label{assumption general convex and smooth}
$f(x)$ is convex and $\ell$-Lipschitz smooth, that is, there exists a constant $\ell > 0$, such that for any $x,~x' \in \mathbb{R}^{n}$,
\begin{align*}
   0\leq  \left( \nabla f(x) - \nabla f(x') \right)^\top (x - x') \leq  \ell \| x - x' \|^2.
\end{align*}
\end{assumption}

\begin{assumption}\label{assumption constraint qualification}
	The matrix $T^{\top}$ is of full row rank and satisfies $ \kappa_1 I_m \preceq T^{\top} T  \preceq \kappa_2 I_m$ for some $\kappa_2 \geq \kappa_1 > 0$.
\end{assumption}
When all constraints are active,  the above condition is known as the linear independence constraint qualification (LICQ) (\cite{nocedal1999numerical}).
It has been shown in \cite{qu2019exponential} that a primal-dual gradient algorithm for \eqref{eq:problem} achieves global exponential convergence if $f(x)$ is strongly convex and the strong convexity can be relaxed to the subspace within the kernel of $T^\top$ (\cite{li2025exponential}). However, for inequality constraints, the relaxation induced by the constraints is more restricted, since an inequality constraint may not always be active, and when inactive, the problem effectively reduces to an unconstrained one. The corresponding conditions are stated below.
\begin{assumption}\label{assumption strongly convex region}
$f(x)$ is $\mu$-strongly convex in $\text{ker} (T^\top)$, and whenever an inequality constraint is inactive, that is,
\begin{align*}
    & \left( \nabla f(x) - \nabla f(x') \right)^\top (x - x') \geq \mu \| x - x'\|^2, \\
    &\text{for all } x, x' \in \text{ker} (T^\top), \\
    & \text{and for all } x, x' \in \left\{ y \in \mathbb{R}^{n} : T_i^{\top} y < b_i \right\}, ~i = 1,\ldots, m.
\end{align*}
\end{assumption}
\begin{rem}
\ml{This assumption is weaker than requiring strong convexity over the whole domain (e.g., \cite{dhingra2018proximal,qu2019exponential,tang2020semi}).} The strong convexity parameters for the above subsets can be different but we use the same value here for simplicity. 
\end{rem}
Under the above assumptions, problem \eqref{eq:problem} is convex. The existence of its solution is guaranteed by the constraint qualification in Assumption~\ref{assumption constraint qualification}, while its uniqueness follows from the strong convexity of the objective function within the feasible set, as stated in Assumption~\ref{assumption strongly convex region}.
\begin{lem}\label{lem: unique solution}
Under Assumptions \ref{assumption general convex and smooth}, \ref{assumption constraint qualification}, and \ref{assumption strongly convex region}, problem \eqref{eq:problem} has a unique solution.
\end{lem}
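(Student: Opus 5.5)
The plan is to show that $f$ is strongly convex, hence coercive and strictly convex, on a closed convex set containing the feasible set $\mathcal{F} := \{x \in \mathbb{R}^n : T^\top x \leq b\}$. Existence and uniqueness then follow from standard arguments. The strong convexity on $\ker(T^\top)$ is not needed for this lemma; the half-space part of Assumption~\ref{assumption strongly convex region} suffices. I assume $m \geq 1$, as the statement implicitly does.

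\emph{Step 1: $\mathcal{F}$ is nonempty and closed.} By Assumption~\ref{assumption constraint qualification}, $T^\top$ has full row rank, so $T^\top x = b$ has a solution, for instance $\bar x = T(T^\top T)^{-1} b$. Hence $\bar x \in \mathcal{F}$. Closedness and convexity of $\mathcal{F}$ are immediate, since it is an intersection of closed half-spaces.

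\emph{Step 2: strong convexity on the closed half-space $\bar H_1 := \{y : T_1^\top y \leq b_1\} \supseteq \mathcal{F}$.} By Assumption~\ref{assumption strongly convex region}, the monotonicity inequality $(\nabla f(x)-\nabla f(x'))^\top(x-x') \geq \mu\|x-x'\|^2$ holds for all $x,x'$ in the open half-space $H_1 = \{y : T_1^\top y < b_1\}$. Since $\nabla f$ is continuous (indeed $\ell$-Lipschitz by Assumption~\ref{assumption general convex and smooth}) and every point of $\bar H_1$ is a limit of points of $H_1$, the inequality extends to $\bar H_1$ by taking limits. Because $\bar H_1$ is convex, for $x,x_0 \in \bar H_1$ the segment $x_0 + s(x-x_0)$, $s \in [0,1]$, stays in $\bar H_1$. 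Integrating $\frac{d}{ds} f(x_0+s(x-x_0))$ and applying the monotonicity inequality at the pairs $(x_0+s(x-x_0),\, x_0)$ gives
\begin{align*}
f(x) \geq f(x_0) + \nabla f(x_0)^\top (x-x_0) + \tfrac{\mu}{2}\|x-x_0\|^2, \quad \forall x, x_0 \in \bar H_1 .
\end{align*}
I expect this limiting and integration step to be the only point needing care. Specifically, one must check that the assumption is imposed on the open set, and that the closure argument is legitimate; it is, because continuity of $\nabla f$ carries over both the strict-region inequality and the segment argument.

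\emph{Step 3: existence and uniqueness.} Fix $x_0 = \bar x \in \mathcal{F}$. The quadratic lower bound from Step 2 shows that $f(x) \to \infty$ as $\|x\| \to \infty$ within $\mathcal{F}$. Hence the sublevel set $\{x \in \mathcal{F} : f(x) \leq f(\bar x)\}$ is nonempty, closed and bounded, and the continuous function $f$ attains its minimum on it; this minimum is a global minimizer of \eqref{eq:problem}. For uniqueness, suppose $x^\star \neq y^\star$ are two minimizers. Their midpoint lies in $\mathcal{F}$, and the strong convexity on $\bar H_1$ from Step 2 gives
\begin{align*}
f\!\left(\tfrac{x^\star+y^\star}{2}\right) \leq \tfrac12 f(x^\star) + \tfrac12 f(y^\star) - \tfrac{\mu}{8}\|x^\star-y^\star\|^2 < f(x^\star),
\end{align*}
which is a contradiction. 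Therefore \eqref{eq:problem} has a unique solution.
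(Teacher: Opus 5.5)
Your proof is correct and takes essentially the paper's route: the full-row-rank condition of Assumption~\ref{assumption constraint qualification} gives a feasible point, and the strong convexity of Assumption~\ref{assumption strongly convex region}, extended by continuity of $\nabla f$ from the open half-space to its closure (valid because $T_1 \neq 0$ under Assumption~\ref{assumption constraint qualification}), gives uniqueness. On existence you are more careful than the paper, which credits the constraint qualification alone; that assumption only supplies feasibility of an unbounded polyhedron, and attainment of the minimum really comes from the quadratic growth you derive in Step~2.
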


\subsection{Augmented primal-dual gradient flow}
The inequality in \eqref{eq:problem} can be replaced by equality 
$T^\top x - b + z \circ z = 0$
where $z \in \mathbb{R}^{m}$ is a slack variable and $\circ$ represents the elementwise product.
Applying the augmented Lagrange multiplier method gives (\cite{bertsekas2014constrained})
\begin{align*}
\mathcal{L} (x, \lambda, z) = &  f(x) + \sum_{j = 1}^{m} \lambda_j \left( T_j^\top x - b_j + z_j^2 \right)\\
& \qquad \quad + \frac{\rho}{2}   \sum_{j = 1}^{m}  \left(  T_j ^\top x - b_j + z_j^2 \right)^2.
\end{align*}
Note that minimization with respect to $z$ can be solved explicitly,  then the derived augmented Lagrangian for solving \eqref{eq:problem} is formulated as (\cite{qu2019exponential}),
\begin{align}
    L (x, \lambda) = \min_{z} \mathcal{L} (x, \lambda, z)  = f(x) + \sum_{j = 1}^{m} H_{\rho} (T_j^{\top} x - b_j , \lambda_j)
\end{align}
where $\rho > 0$ is a free parameter and $H_{\rho}$ is given by
\begin{align*}
    & H_{\rho} (T_j^{\top} x - b_j, \lambda_j) = \\
    & 
    \begin{cases}
        (T_j^{\top} x - b_j) \lambda_j + \frac{\rho}{2} (T_j^{\top} x - b_j)^{2}, & \rho \left(T_j^{\top} x - b_j \right) + \lambda_j \geq 0\\
        -\frac{\lambda_j^2 }{2 \rho}, & \text{otherwise}
    \end{cases}
\end{align*}
The primal-descent dual-ascent gradient dynamics for $L(x, \lambda)$ is then given by 
\begin{equation}\label{eq: augmented primal-dual algorithm}
    \begin{aligned}
        & \dot{x} = -\nabla f(x) - \sum_{j = 1}^{m} \max\left\{ \rho \left( T_j^{\top} x - b_j \right) + \lambda_j , 0 \right\} T_j \\
        & \dot{\lambda} = \sum_{j = 1}^{m} \frac{\max\left\{ \rho \left( T_j^{\top} x - b_j \right) + \lambda_j , 0 \right\} - \lambda_j}{\rho} e_j
\end{aligned}
\end{equation}
where $e_j \in \mathbb{R}^{m}$ is the $j$-th standard basis vector.

Let $x^*$ be the unique optimal solution of \eqref{eq:problem}. 
Under Assumption~\ref{assumption general convex and smooth}, $\nabla f (x)$ is Lipschitz continuous, it is differentiable almost everywhere by Rademacher's theorem (\cite{clarke1990optimization}), and there exists $0 \preceq F(x) \preceq \ell I_n$ such that $\nabla f (x) - \nabla f(x^*) = F(x) (x - x^*)$ (\cite{qu2019exponential}). When $f$ is strongly convex, $F(x) \succeq \mu I_n$.

\subsection{Exponential convergence}
\begin{thm}\label{thm exponential convergence}
Under Assumptions~\ref{assumption general convex and smooth}, \ref{assumption constraint qualification}, and \ref{assumption strongly convex region}, the variable $x(t)$ in the augmented primal-dual gradient flow \eqref{eq: augmented primal-dual algorithm} exponentially converges to the optimal solution of problem \eqref{eq:problem}.
\end{thm}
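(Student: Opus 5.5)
Let $(x^*,\lambda^*)$ be the KKT pair of \eqref{eq:problem}; it exists and is unique by Lemma~\ref{lem: unique solution} and the full row rank of $T^\top$ in Assumption~\ref{assumption constraint qualification}. It is an equilibrium of \eqref{eq: augmented primal-dual algorithm}, since complementary slackness gives $\max\{\rho(T_j^\top x^*-b_j)+\lambda_j^*,0\}=\lambda_j^*$. Set $\tilde x=x-x^*$ and $\tilde\lambda=\lambda-\lambda^*$. I would first rewrite the flow as an LPV system in the error coordinates.

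\emph{LPV representation.} The map $s\mapsto\max\{s,0\}$ is monotone and $1$-Lipschitz. Hence, for each $j$,
\[
\max\{\rho(T_j^\top x-b_j)+\lambda_j,0\}-\lambda_j^*=\sigma_j(t)\bigl(\rho T_j^\top\tilde x+\tilde\lambda_j\bigr),\qquad \sigma_j(t)\in[0,1].
\]
Together with $\nabla f(x)-\nabla f(x^*)=F(x)\tilde x$, this gives
\[
\dot{\tilde x}=-\bigl(F+\rho T\Sigma T^\top\bigr)\tilde x-T\Sigma\tilde\lambda,\qquad
\dot{\tilde\lambda}=\Sigma T^\top\tilde x-\tfrac1\rho(I-\Sigma)\tilde\lambda,
\]
where $\Sigma=\mathrm{diag}(\sigma)$. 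The system matrix $A(\sigma,F)$ is affine in $\sigma\in[0,1]^m$, so it lies in the convex hull of the $2^m$ vertex matrices indexed by active sets $S\subseteq\{1,\dots,m\}$, with $F$ entering as a bounded parameter. The goal is a common $P\succ0$ with $PA+A^\top P\preceq-\varepsilon I$ on the whole polytope. Then $V=\xi^\top P\xi$ decays exponentially along Carathéodory solutions, since the vector field is locally Lipschitz.

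\emph{Vertex construction.} For each vertex $S$ I would use the coordinate change $T_S=T E_S$, split $\lambda$ into active and inactive parts, and take
\[
A_S=\begin{bmatrix}-(F+\rho T_ST_S^\top)&-T_S&0\\ T_S^\top&0&0\\0&0&-\tfrac1\rho I\end{bmatrix}.
\]
The inactive block is trivially stable. The active block has the form of Lemma~\ref{lem:Hurwitz}, because $T_S^\top$ has full row rank by Assumption~\ref{assumption constraint qualification}. Its $(1,1)$ block $F+\rho T_ST_S^\top$ should be uniformly positive definite: $F\succeq0$, and Assumption~\ref{assumption strongly convex region} gives strong convexity on $\ker T^\top$ and on the inactive regions, so for large enough $\rho$ the penalty term supplies curvature in the constrained directions. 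Corollary~\ref{cor: varying parameter Hurwitz} then gives a Lyapunov matrix $P_S$ for each vertex, uniformly in $F$.

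\emph{Gluing the vertices.} To obtain a single $P$, I would follow Corollary~\ref{cor: convex hull CLM}: order the vertices and, at each step, verify the non-strict condition $P_{(k)}\in\bar{\mathcal P}_{A_{(k+1)}}$. The natural candidate is the canonical matrix $P=\begin{bmatrix}I&\eta T\\ \eta T^\top&I\end{bmatrix}$ for small $\eta$, the standard primal–dual cross term. Without the cross term, the $\tilde x,\tilde\lambda$ coupling is skew and cancels, giving
\[
-2\tilde x^\top F\tilde x-2\rho\|\Sigma^{1/2}T^\top\tilde x\|^2-\tfrac2\rho\tilde\lambda^\top(I-\Sigma)\tilde\lambda\le0
\]
at every $\sigma$. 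This is the non-strict inequality for every vertex. Theorem~\ref{thm: common Lyapunov matrix for Hurwitz matrices} then upgrades it by adding $\alpha^{-1}$ times a strict vertex certificate.

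\emph{Main obstacle.} The hardest step is strictness in the directions where $F$ degenerates, that is, where $f$ is only convex. I would handle this with a case argument. If $\tilde x$ has a component outside $\ker T^\top$ and the corresponding constraints are active, the penalty $\rho\|\Sigma^{1/2}T^\top\tilde x\|^2$ controls it. If those constraints are inactive, Assumption~\ref{assumption strongly convex region} gives $F\succeq\mu I$ there. The $\eta T$ cross term converts $\tilde\lambda$ into a strict decrease through $\kappa_1$. I expect that handling states straddling active and inactive regions, where $\sigma_j\in(0,1)$, will force this mixture argument. It will also likely require $\rho$ and $\eta$ to be chosen jointly.
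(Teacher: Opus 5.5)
Your skeleton matches the paper's: secant multipliers $\sigma_j\in[0,1]$, vertices indexed by active sets, $I$ as a non-strict certificate, and upgrades via Theorem~\ref{thm: common Lyapunov matrix for Hurwitz matrices} and Corollary~\ref{cor: convex hull CLM}. But your vertex step is false as stated, and everything after it leans on it. For a proper subset $S$ you cannot get $P_S$ uniformly in $F$. Where all constraints are active, Assumption~\ref{assumption strongly convex region} only gives curvature of $F$ along $\ker T^\top$. Meanwhile $\rho T_ST_S^\top$ vanishes on the larger space $\ker T_S^\top$, so no choice of $\rho$ helps. Indeed, if $Fu=0$ and $\Sigma T^\top u=0$, then $A(\sigma,F)(u,0)=0$, so at such parameter pairs no Lyapunov matrix exists at all. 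For example, $A_\emptyset=\mathrm{blkdiag}(-F,-\tfrac1\rho I)$ is not Hurwitz when $F$ is singular. Hence no common $P$ exists over all admissible $F$ and all $\sigma\in[0,1]^m$, and the proof must use which pairs $(F,\sigma)$ actually occur.

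The paper's proof is built around this pairing. It uses $I$ as a strict certificate for $H(F,\mathbf{0})$ only on the inactive region, where $F\succeq\mu I$. It certifies degenerate $F$ only at $H(F,\mathbf{1}_m)$, where $F+\rho TT^\top\succeq\gamma I$ holds for every $\rho>0$: if $v^\top(F+\rho TT^\top)v=0$, then $T^\top v=0$ and $v^\top Fv=0$, contradicting curvature on $\ker T^\top$. So ``large $\rho$'' is neither needed at the full vertex nor sufficient at the others.

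Your case split tries to do the same, but it equates ``constraint $j$ inactive'' (what controls $F$) with ``$\sigma_j$ small'' (what enters the matrix). In fact $\sigma_j$ is the secant slope of $\max\{\cdot,0\}$ between $\rho(T_j^\top x-b_j)+\lambda_j$ and $\lambda_j^*$, so it depends on $\lambda_j$ as well. For instance, $T^\top x\ge b$ with very negative $\lambda$ puts you near $S=\emptyset$ with possibly singular $F$. Linking the two is precisely what your ``Main obstacle'' paragraph leaves open.

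The gluing is not carried out either. You verify non-strictness only for $P=I$. Theorem~\ref{thm: common Lyapunov matrix for Hurwitz matrices} needs a matrix that is strict at one vertex, and for $I$ that is only $S=\emptyset$, and only where $F\succeq\mu I$. Corollary~\ref{cor: convex hull CLM} then needs the updated matrix $P_{(k)}=\alpha I+P_S$, which carries $\tilde x$--$\tilde\lambda$ cross terms, to be non-strict at the next vertex. Your skew-cancellation identity says nothing about that.

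For $m=1$ the paper's certificate is exactly $\alpha I+P_0$, with $P_0$ from Lemma~\ref{lem:Hurwitz} for $H(F,\mathbf{1}_m)$. For $m>1$ it adds one constraint at a time and pads the untreated multipliers via $\mathrm{blkdiag}(P_{(k)},I)$; their block at $\theta_j=0$ is $-\tfrac1\rho I$. You give no mechanism for this step.

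Your cross-term candidate could in fact bypass Corollary~\ref{cor: convex hull CLM}. Take $Q=\begin{bmatrix}0&T\\ T^\top&0\end{bmatrix}$ and assume $F+\rho T\Sigma T^\top\succ0$. Then $\ker(A+A^\top)=\{(0,w):w_j=0\text{ whenever }\sigma_j<1\}$. On this kernel, $(0,w)^\top(QA+A^\top Q)(0,w)=-2\|Tw\|^2\le-2\kappa_1\|w\|^2$. A compactness argument then gives a single $P=I+\eta Q$, for small $\eta$, with uniform margin over the polytope. But this works only on a parameter set where $F+\rho T\Sigma T^\top\succeq\gamma I$ uniformly, which is the same unresolved point.
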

\begin{pf}
Let $( x^*, \lambda^*)$ be the equilibrium point of \eqref{eq: augmented primal-dual algorithm},  and denote $\tilde{x} = x - x^*$, $\tilde{\lambda} = \lambda - \lambda^*$, then the error dynamics can be written as
\begin{align}\label{eq: error system compact}
    \begin{bmatrix}
    \dot{\tilde{x}}\\ \dot{\tilde{\lambda}}    
    \end{bmatrix}
    = & 
    \underbrace{
    \begin{bmatrix}
    -F(x) - \rho T \Gamma (x, \lambda) T^{\top} & - T \Gamma(x, \lambda)\\
    \Gamma(x, \lambda) T^{\top} &  \frac{1}{\rho} ( \Gamma(x, \lambda) - I)
    \end{bmatrix}
    }_{G( F(x), \Gamma (x, \lambda))}
    \begin{bmatrix}
    {\tilde{x}}\\{\tilde{\lambda}}    
    \end{bmatrix}
\end{align}
where $\Gamma (x, \lambda) \in \mathbb{R}^{m \times m}$ is a diagonal matrix function depending on $x$, $\lambda$, and satisfies $0 \preceq \Gamma (x, \lambda)  \preceq I_{m}$.\\
Let us define
$H (F(x), \theta (t) ) := G(F(x), \text{diag} (\theta (t) ))$, where $\theta (t) \in [0, 1] \times \ldots \times [0, 1] \subset \mathbb{R}^{m}$, that is,
    \begin{align}\label{eq: matrix to be analyzed}
        H (F(x), \theta)= 
        \begin{bmatrix}
            - F(x) - \rho T \text{diag} (\theta ) T^{\top} & - T \text{diag} (\theta )\\
            \text{diag} (\theta ) T^{\top} &  \frac{1}{\rho} ( \text{diag} (\theta ) - I_m)
        \end{bmatrix}.
    \end{align}
The proof is reduced to showing that $H (F(x), \theta)$ is Hurwitz and there exists a CLM $P \in \mathcal{P}_{H (\theta)}^{\varepsilon}$ with some $\varepsilon > 0$, uniformly for any $x$ and $\theta \in [0, 1]^{m}$.
Define
\begin{align*}
    & H (F(x), \mathbf{0}) = \begin{bmatrix}
    - F(x) & \mathbf{0} \\
     \mathbf{0} &  - \frac{1}{\rho} I_m
    \end{bmatrix}, \\~
& H (F(x), \mathbf{1}_m) = 
\begin{bmatrix}
    - F(x) - \rho T T^{\top} & - T \\
     T^{\top} &  \mathbf{0}
    \end{bmatrix}.
\end{align*}
Under Assumption~\ref{assumption strongly convex region}, 
$F(x) \succeq \mu I_n$ when $T^{\top} x <  b$, and $F(x) +  \rho T T^{\top}  \succeq \gamma I$, for some $\gamma > 0$, when $T^{\top} x \geq b$.
Then, the above two matrices are both Hurwitz in their respective regions, by Corollary~\ref{cor: varying parameter Hurwitz}.
Assume first that there is only one inequality constraint, that is, $m = 1$, a direct application of Theorem~\ref{thm: common Lyapunov matrix for Hurwitz matrices} with $ H (F(x), \mathbf{0})$ and $H (F(x), \mathbf{1}_m)$ proves the existence of a CLM $P_ {(1)}  = \alpha I_{n+1} + P_0$, where $P_0$ is a CLM for $H (F(x), \mathbf{1}_m) $, for all $x$.

When there are multiple constraints, we only need to apply Theorem~\ref{thm: common Lyapunov matrix for Hurwitz matrices} repeatedly. If there exists a CLM for the $N = 2^{m}$ Hurwitz matrices given at $\theta \in \left\{ 0, 1\right\}^{m}$, then the polytopic LPV system is exponentially stable.\\
Specifically, let us first consider the cases where $\theta = \mathbf{0}$ and $\theta = e_1 \in \mathbb{R}^{m}$. We have
\begin{align*}
    H (F(x), e_1 ) = 
    \left[
\begin{array}{cc:c}
- F(x) - \rho T_{1} T_{1}^{\top}  & - T_{1} & \mathbf{0} \\
T_{1}^{\top} & 0 & \mathbf{0} \\
\hdashline
\mathbf{0} & \mathbf{0} & - \frac{1}{\rho} I_{m -1}
\end{array}
\right]
\\
    H (F(x), \mathbf{0}) = 
    \left[
\begin{array}{cc:c}
- F(x)   & \mathbf{0} & \mathbf{0} \\
\mathbf{0} & -\frac{1}{\rho} & \mathbf{0} \\
\hdashline
\mathbf{0} & \mathbf{0} & - \frac{1}{\rho} I_{m -1}
\end{array}
\right].
\end{align*}
Under Assumption~\ref{assumption strongly convex region}, both $H (F(x), e_1 )$ and $H (F(x), \mathbf{0})$ are Hurwitz in $\left\{y : T_{1}^\top y \geq b \right\}$,  and $\left\{ y : T_{1}^\top y < b \right\}$ respectively, as they are in the form of a direct sum of a Hurwitz block with $- \frac{1}{\rho} I_{m -1}$.
Similarly to the previous argument, we can obtain by Theorem~\ref{thm: common Lyapunov matrix for Hurwitz matrices} that the convex combination of the two matrices given by 
\begin{align*}
    H (F(x),  \theta_1 e_1 ) = 
    \left[
\begin{array}{cc:c}
- F(x)- \rho T_{1} \theta_1 T_{1}^{\top}  & - T_{1} \theta_1 & \mathbf{0} \\
\theta_1 T_1^{\top}  & -\frac{1 - \theta_1}{\rho} & \mathbf{0} \\
\hdashline
\mathbf{0} & \mathbf{0} & - \frac{1}{\rho} I_{m -1}
\end{array}
\right],
\end{align*}
admits a CLM in the form of $P_ {(2)}  = \text{blkdiag} (P_ {(1)} , I_{m-1})$ for all $\theta_1 \in [0, 1]$,  where $P_{(1)}$ is a CLM \ml{for the first diagonal blocks of $ H (F(x), e_1 )$ and $H (F(x), \mathbf{0})$,} derived similarly to the case when $m = 1$. \ml{So, $P_ {(2)} \in {\mathcal{P}}_{H (F(x),   \theta_1 e_1)}$.}
Next, let us consider the case where $\theta = \theta_1 e_1 + e_2$, that is,
\begin{align*}
    & H (F(x),   \theta_1 e_1 + e_2 ) \\
    =  & \left[
\begin{array}{cc:c:c}
- F(x) - \rho T_{1} \theta_1 T_{1}^{\top}  & - T_{1} \theta_1 &  -T_{2} & \mathbf{0} \\
\theta_1 T_1^{\top}  & -\frac{1 - \theta_1}{\rho} & 0 & 0 \\
\hdashline
T_2^{\top} & 0 & 0 & 0 \\
\hdashline
\mathbf{0} & 0 & 0 & - \frac{1}{\rho} I_{m -2}\\
\end{array}
\right].
\end{align*}
\ml{Since $P_{(1)}$ is the CLM for the first diagonal block of the above matrix, we directly have} $P_ {(2)} \in \bar{\mathcal{P}}_{H (F(x),   \theta_1 e_1 + e_2 ) }$. We then apply Theorem~\ref{thm: common Lyapunov matrix for Hurwitz matrices} again and get that matrices in the following form admit a common Lyapunov matrix,
\begin{align*}
    & H (F(x),  \theta_1 e_1 + \theta_2 e_2 ) \\
    = & \left[
\begin{array}{cc:c}
-F(x) - \rho T_{1} \theta_1 T_{1}^{\top}  & - \left[ T_{1}  ~ T_{2} \right] \text{diag} (\theta_1, \theta_2 ) & \mathbf{0} \\
\text{diag} (\theta_1, \theta_2 )  \left[ T_{1}  ~ T_{2} \right]^{\top}  & -\frac{1}{\rho} \left( I - \text{diag} (\theta_1, \theta_2 )\right) & \mathbf{0} \\
\hdashline
\mathbf{0} & \mathbf{0} & - \frac{1}{\rho} I_{m -2}
\end{array}
\right].
\end{align*}
The remaining steps follow by induction. Hence, we conclude by Corollary~\ref{cor: convex hull CLM} that $H (F(x), \theta)$ admit a CLM uniformly for all $x$ and $\theta \in [0, 1]^{m}$. Thus, system \eqref{eq: error system compact} is exponentially stable.
\end{pf}

\subsection{A Numerical Search}
We can construct a numerical search derived from static IQC (see, e.g., \cite{hu2016exponential}) for the convergence rate of the augmented primal-dual gradient flow. For ease of discussion, we consider globally strongly convex function here.
\begin{cor}\label{cor}
Suppose that the objective function $f(x)$ is $\mu$-strongly convex globally in Theorem~\ref{thm exponential convergence}, then the variable $x(t)$ converges at a rate $\alpha > 0$ if there exists $P \succ 0$ such that
\begin{align}\label{eq: lmi search}
\begin{bmatrix}
P H (\mu I_n, \theta)  + H(\mu I_n, \theta)^\top P + 2 \alpha P & P B + \ell C^{\top} \\
B^\top P + \ell C & -2 I_n
\end{bmatrix}
\preceq 0
\end{align}
for all $\theta \in [0, 1]^{m}$, 
where $H(\cdot, \cdot)$ is defined in \eqref{eq: matrix to be analyzed}, 
$B =\begin{bmatrix}
-I_{n} \\ 0_{m \times n}
\end{bmatrix} $, and
$C = \begin{bmatrix}
I_{n} & 0_{n \times m}
\end{bmatrix}$.
\end{cor}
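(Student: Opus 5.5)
We rewrite the error dynamics \eqref{eq: error system compact} in Lur'e form, treating the gradient nonlinearity as a feedback. With $f$ globally $\mu$-strongly convex, set $\phi(x) := \nabla f(x) - \nabla f(x^*) - \mu \tilde{x}$. Then \eqref{eq: error system compact} becomes
\[
\dot{\xi} = H(\mu I_n, \theta(t))\,\xi + B\,u, \qquad u = \phi = (F(x)-\mu I_n)\,C\xi,
\]
where $\xi = [\tilde{x}^\top, \tilde{\lambda}^\top]^\top$ and $\theta(t)\in[0,1]^m$ collects the diagonal entries of $\Gamma(x,\lambda)$. The sign convention $B = [-I_n;\,0]$ works because the $-F(x)\tilde{x}$ term splits as $-\mu \tilde{x} - u$.

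Since $F(x)-\mu I_n$ is symmetric with $0 \preceq F(x)-\mu I_n \preceq (\ell-\mu) I_n \preceq \ell I_n$, the pointwise sector (static IQC) inequality $u^\top u \le \ell\, u^\top C\xi$ holds. Equivalently, with $y = C\xi$,
\[
2\ell\, u^\top y - 2 u^\top u \ge 0 .
\]
To check this, write $M = F(x)-\mu I_n$. Then $M^2 \preceq \ell M$ because $M$ has eigenvalues in $[0,\ell]$.

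Take $V(\xi) = \xi^\top P \xi$. Multiply the LMI \eqref{eq: lmi search} on both sides by $[\xi^\top, u^\top]$ and its transpose. This gives
\[
\xi^\top (PH + H^\top P + 2\alpha P)\xi + 2\xi^\top P B u + 2\ell\, u^\top C \xi - 2u^\top u \le 0 .
\]
The first three terms involving $P$ equal $\dot V + 2\alpha V$ along trajectories, for almost every $t$. Since the last two terms are nonnegative by the sector bound, we get $\dot V \le -2\alpha V$. Hence $V(t) \le e^{-2\alpha t} V(0)$ and $\|\xi(t)\| \le \sqrt{\kappa(P)}\, e^{-\alpha t}\|\xi(0)\|$, which yields the rate $\alpha$ for $x(t)$.

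The LMI is imposed for all $\theta \in [0,1]^m$, and it is affine in $\theta$ for fixed $P$. So it is enough to check it at the $2^m$ vertices, and convexity gives it for any time-varying $\theta(t)$, including the one induced by $\Gamma(x,\lambda)$.

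The step needing the most care is justifying the derivative computation. The vector field is only Lipschitz, so $F(x)$ and $\Gamma$ exist only almost everywhere, in the sense recalled before Theorem~\ref{thm exponential convergence}. Trajectories are absolutely continuous, however, so $V(\xi(t))$ is absolutely continuous and $\dot V \le -2\alpha V$ holds for almost every $t$. Grönwall's inequality then still applies.

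A second point to verify is the factor convention: that the off-diagonal block $PB + \ell C^\top$ together with $-2I_n$ matches exactly the multiplier $\begin{bmatrix}0 & \ell C^\top\\ \ell C & -2I\end{bmatrix}$. I would check this by expanding the quadratic form directly.
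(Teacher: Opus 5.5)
Your proof is correct and follows the static-IQC/S-procedure route the paper points to (it gives no explicit proof and only cites the static IQC literature): you loop-shift by $\mu$ so that $u=\nabla f(x)-\nabla f(x^*)-\mu\tilde{x}$ lies in the sector $[0,\ell-\mu]\subset[0,\ell]$, multiply the LMI by $[\xi^\top,\,u^\top]$ to get $\dot V\le-2\alpha V$, and use affinity in $\theta$ to cover the time-varying $\Gamma(x,\lambda)$. One correction: $F(x)$ and $\Gamma(x,\lambda)$ are defined at every point (as mean-value and difference-quotient factors) and the vector field is continuous, so $\dot\xi=H(\mu I_n,\theta)\xi+Bu$ holds for all $t$ and no almost-everywhere argument is needed; your justification as written would be shaky if these factors existed only on a full-measure set of states, since a trajectory can spend positive time in a null set.
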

For every fixed $\alpha$, condition \eqref{eq: lmi search} corresponds to $2^{m}$ LMIs given by vertices $\theta \in \left\{ 0, 1 \right\}^{m}$. A bisection search can be carried out to find the maximal rate $\alpha$. The existence of $P$ to \eqref{eq: lmi search} for a sufficiently small $\alpha$ is guaranteed by Theorem \ref{thm exponential convergence}.
Numerical search using more advanced multipliers can be found in \cite{lessard2016analysis}, \cite{scherer2021convex}.

\subsection{Discussion}
Instead of designing specific Lyapunov functions for primal-dual dynamics, we focus on showing the existence of a common Lyapunov matrix related to the construction of the common Lyapunov matrix for the convex hull of a group of Hurwitz matrices which are the system states associated with the active and inactive status of inequality constraints, thus certifying the exponential convergence of the augmented primal-dual dynamics. Therefore, our results can be extended to the exponential stability analysis of distributed optimization algorithms under various affine constraints, provided that the associated linear systems admit a common Lyapunov matrix for the Hurwitz matrices at their vertices.

\section{Numerical Example}\label{sec: example}
We provide a numerical example to illustrate the effect of relaxation for global convexity under affine inequality constraints.
Consider the optimization problem \eqref{eq:problem} with $n = 2$, $m = 1$, where
$
T =
\begin{bmatrix}
1 \\ 1
\end{bmatrix},
b = 2
$
and
\begin{align*}
f(x) = (\max \left\{ 0, 3 -  x_1 - x_2  \right\} )^2 + \frac{1}{2} x_2^2.
\end{align*}
It can be verified that necessary assumptions in this work are satisfied. In particular, the objective function $f(x)$ is continuously differentiable but not twice differentiable. It is globally convex, \ml{$\frac12$}-strongly convex in $\text{ker} (T^\top) = \left\{ \begin{bmatrix} x_1 \\ x_2 \end{bmatrix}: x_1 = -x_2\right\}$ and $0.438$-strongly convex on the region $\left\{ \begin{bmatrix}
x_1 \\ x_2
\end{bmatrix} : x_1 + x_2 \leq 3 \right\}$, respectively, but not globally strongly convex, which still satisfies Assumption~\ref{assumption strongly convex region}. 
The affine inequality constraint is active at the optimal solution given by
$x^* = \begin{bmatrix}
2 \\ 0
\end{bmatrix}.
$
We run algorithm \eqref{eq: augmented primal-dual algorithm} with $\rho = 1$, initial condition $x_1 (0)= x_{2} (0) = 10$, and randomly generated $\lambda (0) \in [0, 1]$. Note that $f(x)$ is not strongly convex at $x(0)$, but the inequality-constrained dynamics ensure exponential convergence toward the feasible set.
The trajectories of $x$ and the error $\| x - x^*\|$ over time are shown in Fig.~\ref{fig:1} and Fig.~\ref{fig:2}, respectively.
\begin{figure}[htbp]
\center
\includegraphics[width = 1\linewidth]{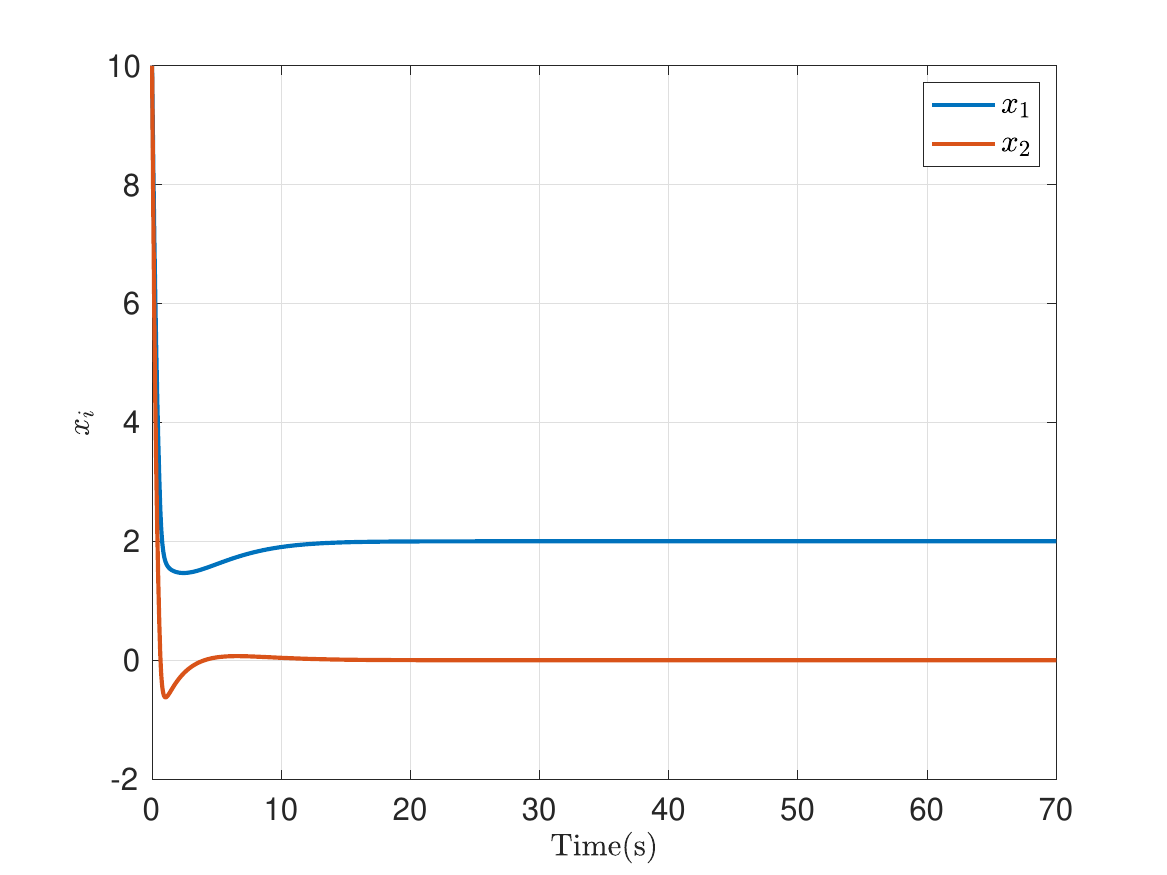}
\caption{Trajectories of $x$ over time.}
\label{fig:1}
\end{figure}
\begin{figure}[htbp]
\center
\includegraphics[width = 1\linewidth]{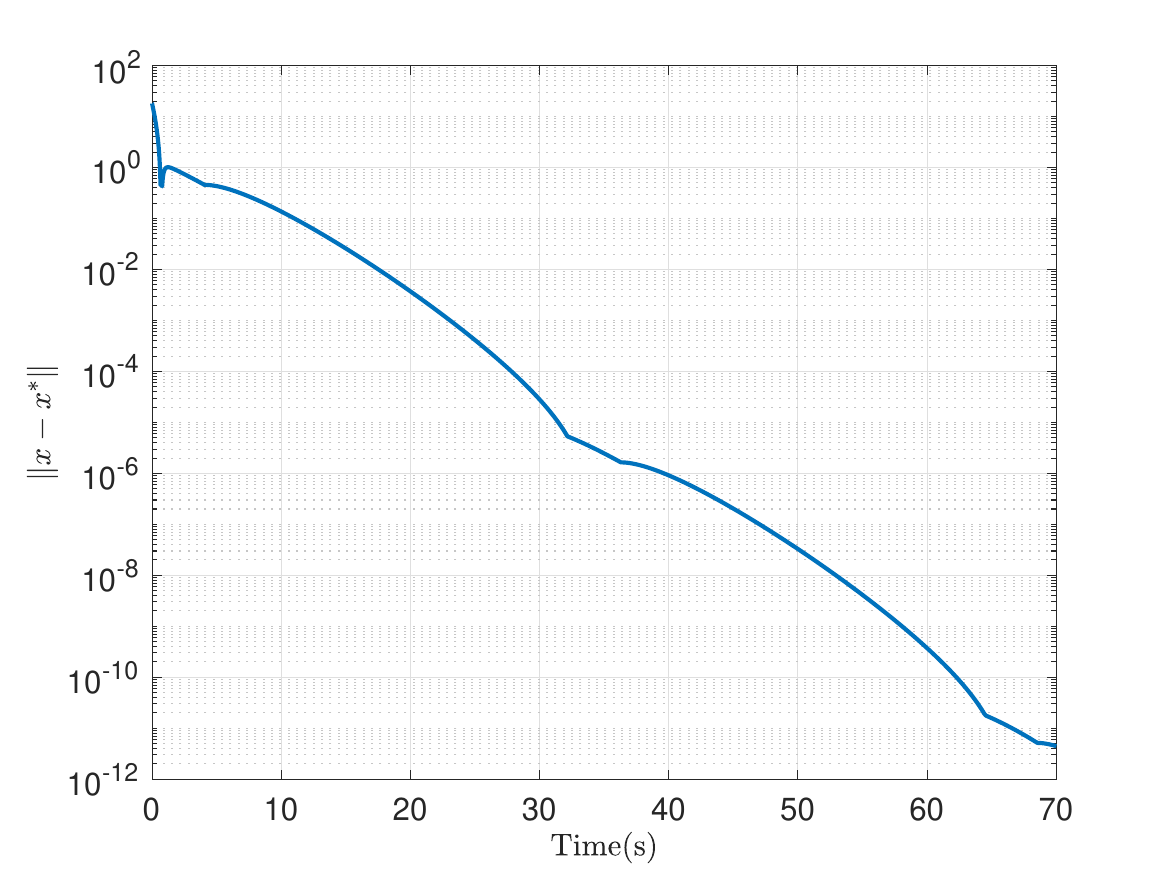}
\caption{Trajectory of the error $\| x - x^*\|$ over time.}
\label{fig:2}
\end{figure}
It can be observed that the augmented primal-dual gradient flow indeed achieves exponential convergence even without global strong convexity,  as suggested by Theorem~\ref{thm exponential convergence}.

As $ \nabla f(x) $ is piecewise continuous and the system matrices change depending on whether the inequality constraint is active or inactive,  the resulting dynamics yield the following three matrix inequalities,
\begin{align*}
\begin{bmatrix}
P H_i + H_i^\top P + 2 \alpha P & P B + \ell_i C^{\top} \\
B^\top P + \ell_i C & -2 I_2
\end{bmatrix}
\preceq 0, \quad i = 1, 2, 3,
\end{align*}
where \ml{$i = 1, 2, 3$ correspond to the regions $\{ x: x_1 + x_2 > 3 \}$, $\{ x: 2 \leq x_1 + x_2 \leq 3 \}$, and $\{ x: x_1 + x_2 < 2 \}$, respectively,} with $\ell_1 = 1$, $\ell_2 = \ell_3 = 2$, and 
\begin{align*}
& H_1 = 
\begin{bmatrix}
\ml{-1} & \ml{-1} & -1\\
\ml{-1} & \ml{-2} & -1\\
1 & 1 & 0
\end{bmatrix},~~
H_2= 
\begin{bmatrix}
-1.438 & -1 & -1\\
-1 & -1.438 & -1\\
1 & 1 & 0
\end{bmatrix},\\
& H_3 = 
\begin{bmatrix}
-0.438 & 0 & 0\\
0 & -0.438 & 0\\
0 & 0 & -1
\end{bmatrix}.
\end{align*}
By a bisection search, we obtain \ml{$\alpha = 0.123$}, and 
\begin{align*}
\ml{P = 
\begin{bmatrix}
2.317 & -0.457 & 0.138\\
\star & 2.216 & 0.406\\
\star  & \star & 2.310
\end{bmatrix} \succ 0}
\end{align*}
which certifies the feasibility of \eqref{eq: lmi search} in Corollary~\ref{cor}.

\section{Conclusion}\label{sec: conclusion}
We have shown that a common Lyapunov matrix exists if and only if the intersection of the set of strict Lyapunov matrices for one matrix and the set of non-strict Lyapunov matrices for the other is nonempty.
This result was then applied to the exponential analysis of the augmented primal-dual gradient flow for constrained optimization under affine inequality constraints.
Possible future extensions include investigating gradient flow under nonlinear inequality constraints as well as convergence properties for distributed algorithms.

\section*{Declaration of Generative AI and AI-assisted technologies in the writing process}
The author used ChatGPT solely for language polishing.
%


\bibliography{ifacconf}

\end{document}